\newtheorem{theorem}{Theorem}
\theoremstyle{definition}
\newtheorem{definition}{Definition}[section]
\title{Predictive Coresets}
\author{Bernardo Flores\\
University of Texas at Austin}
\date{}
\DeclareMathOperator*{\argmin}{arg\,min}
\newcommand{\p}{\mathbb{P}}
\newcommand{\dd}{\mathrm{d}}
\newcommand{\ptn}{\widetilde{\mathbb{P}}_n}
\newcommand{\ph}{\widehat{p}}
\newcommand{\ys}{y^\star}
\newcommand{\yts}{\widetilde{y}^\star}
\newcommand{\x}[1]{y_{N+#1}}
\newcommand{\thx}[1]{\th_{N+#1}}
\newcommand{\TT}{\mathcal{T}}
\newcommand{\XX}{\mathbb{X}}
\newcommand{\sig}{\sigma}
\renewcommand{\th}{\theta}
\newcommand{\thb}{\bm{\theta}}
\newcommand{\ths}{\theta^\star}
\newcommand{\nb}{\bm{n}}
\newcommand{\sbf}{\bm{s}}
\newcommand{\iid}{\overset{\text{iid}}{\sim}}
\newcommand{\ind}{\overset{\text{ind}}{\sim}}
\begin{document}
\def\spacingset#1{\renewcommand{\baselinestretch}%
{#1}\small\normalsize} \spacingset{1}

\maketitle

\begin{abstract}
Modern data analysis often involves massive datasets with hundreds of thousands of observations, making traditional inference algorithms computationally prohibitive. Coresets are selection methods designed to choose a smaller subset of observations while maintaining similar learning performance. Conventional coreset approaches determine these weights by minimizing the Kullback-Leibler (KL) divergence between the likelihood functions of the full and weighted datasets; as a result, this makes them ill-posed for nonparametric models, where the likelihood is often intractable. We propose an alternative variational method which employs randomized posteriors and finds weights to match the unknown posterior predictive distributions conditioned on the full and reduced datasets. Our approach provides a general algorithm based on predictive recursions suitable for nonparametric priors. We evaluate the performance of the proposed coreset construction on diverse problems, including random partitions and density estimation.

\end{abstract}
\vspace{1ex}
\textbf{Keywords:} coresets, predictive inference, Dirichlet process, optimal transport.
\vspace{2ex}

\section{Introduction}

We propose a construction of coresets based on a predictive view of Bayesian posterior inference \citep{fong2024, fortini2012}. The main attraction of the approach is the model-agnostic nature - the method is valid with any inference model and independent of the specific inference goals, making it highly adaptable for a wide range of applications. Such adaptability is particularly valuable in the context of large-scale datasets, now commonplace in fields like genomics and astronomy. While this explosion of data offers incredible opportunities for discoveries, it also brings significant computational challenges. Tasks that were once straightforward, such as evaluating likelihoods several times have become increasingly difficult, making traditional data processing methods impractical. These obstacles have frequently pushed practitioners toward simpler statistical models that might not capture the full complexity of the data, disregarding expressiveness and flexibility that rich hierarchical  and nonparametric models can offer. 

Typically, scaling Bayesian inference for massive data involves tweaking specific algorithms like Markov Chain Monte Carlo (MCMC) or variational Bayes to handle data distributed across systems or streaming in real-time. Techniques such as mini-batches and streaming for variational Bayes \citep{hoffmansvi} \citep{streamingvi}, subsampling methods for MCMC \citep{Quiroz2018SubsamplingMCMC}, and distributed methods for MCMC \citep{scotconsensusmc} are prominent examples. However, these existing methods have their drawbacks.  In practice, subsampling MCMC methods often need to examine a significant portion of the data each time to be able to reach stationarity, which limits their efficiency gains \citep{johndrownofree}. More scalable approaches like consensus MCMC and streaming variational Bayes improve computational efficiency but often lack rigorous theoretical backing and come with no guarantees for the quality of the proposed inference. A key observation in handling large datasets is that much of the data is redundant, while a smaller portion contains unique information. For instance, in a clustering setting, data points at the border of the cluster carry more discrimination power than the ones on the inside; moreover, adding two points close to each other does not add more structural information than just incorporating one of them. Similarly, in genome-wide association studies, only a small proportion of genes actually are correlated to the target, which requires a careful control of the false discovery rate \citep{sesiaKnock}.

Building upon these observations, a recent trend has been to focus on modifying the dataset itself rather than altering the inference algorithms. Specifically, to introduce a preprocessing step that constructs a coreset, a small, weighted subset of the original data that effectively approximates the inference with the full dataset. This coreset can be used with standard inference procedures to produce posterior approximations with guaranteed accuracy \citep{huggins2016}.  Coresets originated in computational geometry \citep{feldman2020}. In the Bayesian context several algorithms have been developed with strong theoretical backings; see \cite{campbell_automated_2019,huggins2016, manousakas2020}; yet the nature of these approaches using likelihoods has limited the applications for deep hierarchical models and likelihood-free settings like non-Euclidean and simulator-based models \citep{winter2023}.

In this paper we extend the concept of coresets by focusing on the posterior predictive distributions, rather than on the likelihoods; as well as using distances between probability measures instead of data points. This change of strategy allows us to include prior structure and to accommodate non-standard applications with data in non-Euclidean spaces. 

In Section 2, we review existing coreset approaches. Section 3 introduces our novel coreset construction alongside a detailed algorithm. In Sections 4 and 5, we demonstrate the versatility of our method through three applications—parametric logistic regression, random partitions and density estimation. Section 6 establishes theoretical guarantees by analyzing posterior contraction rates. Finally, Section 7 outlines an extension of the algorithm from Section 3 that facilitates faster hyperparameter exploration, and we conclude with some final comments.

\section{Coresets}

Coresets are data summarizing methods designed to, given a data set and a learning algorithm, reduce the size of the dataset while retaining a similar learning performance \citep{campbell_automated_2019}. Typically one wishes to approximate a density $\pi^N$ comprised of $N$ potentials $\{f_n(\theta)\}_{n=1}^N$ and a base density $\pi$,
$$
\pi^N(\theta) = \frac{1}{Z}\exp\left(\sum_{n=1}^N f_n(y_n, \theta)\right)\pi(\theta).
$$

In a Bayesian context, $\pi$ is the prior distribution, $f_n$ is the log-likelihood of the $n$th data point, and hence $\pi^N$ is the posterior density. The approximation is formulated by defining a set of sparse weights $w=(w_n)_1^N$ such that only at most $M$ weights are non-zero, where $M << N$, and then running the estimation algorithm on the weighted likelihood
$$
p_\theta (w\odot y_{1:N}) = \frac{1}{Z(w)}\exp\left(\sum_{n=1}^Nw_n f_n(y_n, \theta)\right),
$$
where the operator $\odot$ denotes the weighting by $w$. We denote the posteriors obtained under weights $w$ as $\pi^N_w$. Whenever no weights are used—that is, when all weights are set to 1—we denote the coreset by $\pi_1^N$ and refer to it as the unit coreset.

Finding the set of optimal weights $w^*$ can be done in a variational way by minimizing the Kullback-Leibler (KL) divergence between the weighted posterior $\pi^N_w$ and $\pi^N_1$, defined as
$$
 D_{KL}\left(\pi_w^N\mid\pi_1^N\right) = \int \pi_w^N \log\frac{\pi_w^N}{\pi_1^N}
$$
Thus $w^*$ is the solution of the optimization program
$$
w^* := \argmin_{w\in\mathbb{R}^N}\; D_{KL}\left(\pi_w^N \mid \pi_1^N\right)\quad \text{s.t.}\quad w\geq 0, \,\| w\|_0 \leq M,
$$

where $\| w \|_0 := \sum_n \mathbb{I}(w_n>0)$.

Many algorithms in this direction have been proposed; yet an open question remains; how to extend them to non-parametric settings. The difficulty arises from the fact that the normalizing constants are not available in a general functional form without integrating out some of the randomness and performing truncations. 

\cite{claici_wasserstein_2020} proposed an alternative, variational
view of coresets exploiting the machinery behind optimal
transport theory \citep{villani2003topics}. Recall that the $p$-Wasserstein
distance between two probability measures $\mu$ and $\nu$ is defined as 
$$
W_p(\mu,\nu)=\inf_{\gamma\in \Pi(\mu,\nu)}\int \|x-y\|_p^{1/p}\,\gamma(\dd x,\dd y),
$$
the infimum being taken over the set of couplings $\Pi(\mu,\nu)$ betweeen $\mu$ and $\nu$.

The idea is to introduce a variational family $\Pi_\theta$ and to find a
measure $p_\theta\in\Pi_\theta$ that minimizes the p-Wasserstein distance
with the likelihood  $p$ under the full data set, \emph{i.e.}
\begin{equation} \label{eq:wasscoreset}
\min_{p_\theta\in\Pi_\theta} W_p\left(p_\theta, \,  p\right).
\end{equation}
This was achieved by parametrizing the problem as finding a Lipschitz
map $T$, so that given data $y_1,\dots,y_N$ and a subsample of it
$y^*_1,\dots,y^*_n$, $T$ is obtained by solving
\[
\min_{T} W_p\left\{\ph\left[T(y^*_{1:n})\right], \,\ph\left(y_{1:N}\right)\right\},
\]
where $\ph$ denotes the empirical measure. Here and throughout we use notation $y^*_i$  for the subsamples
to highlight that the subsample is not restricted to the first $n$,
but could be any subset of the full $N$ data points. Note that \eqref{eq:wasscoreset} is equivalent to a Monge problem \citep{villani2003topics}, heavily studied in optimal transport theory.

We leverage this approach on an analogous problem to develop a general
methodology for constructing coresets for nonparametric models. Assume
we observe data $y_1,\dots,y_N\sim  p^0$ from an unknown model $p^0$.

Following the principles of
predictive inference (see \cite{bayespred} for a review), 
uncertainty around $p^0$  can be thought of as being represented 
by missing data. If we had
an infinite sample there would be no randomness left around
$ p^0$. This suggests inference on the data-generating mechanism, no
matter the method, is inherently done by learning the posterior
predictive distribution.

To facilitate the presentation of our method, we summarize the notation used throughout the article in Table \ref{tab:notation}.
\begin{table}[!ht]
    \centering
    \begin{tabular}{||c l || c|} 
 \hline
 Notation & Definition  & Space\\ [0.5ex] 
 \hline\hline
 $\boldsymbol{\cdot}^N$ & Posterior distribution & \multirow{2}{*}{Any probability measure}\\
 $\boldsymbol{\cdot}^*$ & Related to a subsample & \\
 \hline \hline 
 $\pi$ & Prior distribution &  \multirow{2}{*}{Parameters $\Theta$}\\
 $\pi_w$ & Weighted  &\\
 \hline\hline
 $p$& Likelihood &\multirow{7}{*}{Data $\mathbb{X}$} \\
 $p_\theta$ & Parametrized likelihood  & \\ 
 $p_\theta^w$ & Likelihood based on weighted data &\\
 $p_N = p_{y_{1:N}}$ & Posterior predictive under a DP & \\
 $p^0$ & True distribution & \\
 $\tilde{y}$ & Samples of $p_N$ & \\
 $F_\theta$ & Parametric family& \\
 \hline\hline
 $\hat{p}$ & Empirical measure & \multirow{3}{*}{ Probabilities $\mathcal{P}(\mathbb{X})$}\\
 \multirow{2}{*}{$\mathbb{P}_\theta$} & \multirow{2}{*}{Law of $p_\theta$ or 
  mixing measure }& \\
 & &\\
  \hline
    \end{tabular}
    \caption{Table with notation for the probability measures used in throughout this work.}
    \label{tab:notation}
\end{table}
\section{Predictive Coresets}

We now provide a formal definition of predictive coresets. We use \( p(x)
\) denote the probability law of the random element \( x \),
 and blackboard bold font like $\p$ to denote random probability
measures. 
\begin{definition}
  Given a sample \( y_{1:N} \) and a subsample \( \ys_{1:n} \),
  we define a \textbf{predictive coreset} as a measure 
  \( \tilde{\p}_n \propto \sum_{i=1}^n \delta_{\yts_i} \)
  given by the empirical distribution of  
  \(\yts_{1:n} = T(\ys_{1:n}) \),  where \( T \) is the
  \textbf{coreset transform} determined by minimizing a discrepancy \(
  d \) between measures over a family of transformations \( \TT\):
  \begin{equation}\label{eq:coreset}
    T = \arg \min_{T \in \TT}
    d\left[ p(\yts_{n+1:\infty} \mid T(\ys_{1:n})),\;
      p(\tilde{y}_{N+1:\infty} \mid y_{1:N}) \right].
  \end{equation}
\end{definition}
The nature of $\ptn$ as a random probability  measure arises from
the random variables $\yts_{n+1:\infty}$ and $\tilde{y}_{N+1:\infty}$.
Note that we use the additional tilde in $\yts_{n+1:\infty}$ to mark
the implicit transform $T$.

The definition poses two methodological questions:
\begin{itemize}
    \item[(A)] What is the most effective method to determine the conditional distributions in Equation~\eqref{eq:coreset}? 
    \item[(B)]  And how can we parameterize the family of transformations \( \TT \) to ensure sufficient expressiveness?
\end{itemize}

\subsection{A proxy for the posterior predictives}
 The context of coreset approaches are problems where
full posterior inference is not practically feasible.  We therefore do not have access to the true posterior predictive
distributions in \eqref{eq:coreset}, and need to approximate them in a way to ensure that the
predictive coreset is consistent.
For a generic description, 
suppose we aim to estimate a model \( F_\theta \) for \( \theta \in \Theta
\).
For example, $F_\theta$ could be the implied marginal distribution
of the data under the inference model under consideration, including
unknown hyperparameters $\theta$. 

We adopt the nonparametric
learning (NPL) framework of \cite{lyddon_nonparametric_2018}, which places a
Dirichlet process (DP) prior \citep{ghosal2017} centered at \( F_\theta \) on the
data-generating process \( p \): 
\begin{align}\label{eq:postpred}
    y_1, \dots, y_N \mid \mathbb{P}_\theta= p&\sim p, \nonumber \\
    \mathbb{P}_\theta \mid \theta &\sim \text{DP}(F_\theta, \alpha), \\
    \theta &\sim \pi(\dd \theta). \nonumber
\end{align}
Here, \( \pi(\dd \theta) \) is a hyperprior, 
which encodes any possible hierarchical structure in the model. The motivation for adopting \eqref{eq:postpred} is that it will allow us a computation efficient approximation of the posterior predictive in \eqref{eq:coreset}. To start, the posterior distribution based on \( N \) data points under \eqref{eq:postpred}is 
\[
  p_\theta \mid y_{1:N} \sim \p_\theta^N = \text{DP}\left(F_\theta^N, \alpha + N\right),
\]
where
\( F_\theta^N \propto \alpha F_\theta +
N \ph_N  \), and \( \ph_N \propto \sum_{i=1}^N
\delta_{y_i} \) is the empirical measure.
The parameter \( \alpha \) allows us to interpolate between the
Bayesian bootstrap \citep{clyde2001} and sampling from \( F_\theta \). 
The idea here is that the posterior law under \eqref{eq:postpred}
is computationally easy. In particular, 
the DP provides a tractable approximation of the true posterior
predictive under \( F_\theta \) via the Pólya urn scheme, denoted
$p_N$. Due to
posterior consistency, the posterior predictives eventually concentrate around
the true data-generating process \( p^0 \). Explicit
concentration rates will be provided in Section~6. 

By considering two alternative ways to condition the DP, once conditioning on both the full data \( y_{1:N} \) and once conditioning on the coreset support \( \ys_{1:n} \), we can recursively sample trajectories \( (y_{1:N}, \tilde{y}) \) and \( (\ys_{1:n}, \tilde{z}) \). Here we use $\tilde{y}$ and $\tilde{z}$, respectively, to denote the posterior predictive samples. We then compute the transformation $T$ that minimizes the distance between these trajectories. An overall optimal mapping $T$ is finally obtained via  averaging Monte Carlo samples over these transformations. The complete procedure is detailed in Algorithm~\ref{alg:core}.

Recall the discrepancy $d(\cdot,\cdot)$ in \eqref{eq:coreset}. 
A key observation is the following: because we are adding uncertainty around the
data-generating process, we must choose a discrepancy
defined between random probability measures rather than fixed probability
distributions. For a practical implementation (in the upcoming Algorithm \ref{alg:core}) we switch the optimization and the prediction in \eqref{eq:coreset}. We carry out the optimization in (2) for each realization of an approximate posterior predictive draw for $\tilde{y}^*_{n+1:n+M}$ and $\tilde{y}_{N+1:N+M}$, and define $T$ as the average of these optimizations. For a theoretical analysis we will use the Wasserstein metric with a Wasserstein base metric.
 Details of this definition are discussed later, in Section 7.  This
approach allows us to work with models over non-Euclidean spaces, such
as partitions, by incorporating their topologies into the computation
while requiring only well-behaved ground metrics.

\subsection{Parametrization}
Regarding the second question (B) about the target \eqref{eq:coreset},
by using the Wasserstein distance finding the optimal mapping becomes a Monge problem: 
\[
\inf_{T: T_{\#}\mu = \nu} \int \| x - T(x)\|\,\mu(\dd x),
\]
where the infimum is taken over all measurable maps $T$ that push forward $\mu$ to $\nu$. The goal is to find the transformation that minimizes the cost in the $\|\cdot\|$ norm. In the setting of \eqref{eq:coreset}, $\mu$ is the empirical over the augmented full data $\hat{p}[(y_{1:N}, \tilde{y}_{N+1:N+M})]$ and $\nu$ the empirical over the augmented coreset support $\hat{p}[(y_{1:n}^*, \tilde{y}^*_{n+1:n+M})]$ and $T$ is restricted to only transforming the actual observations $y_{1:N}$ and $y_{1:n}^*$.

Finding $T$ is generally challenging. As highlighted by \cite{hutter2021}, the computational complexity becomes prohibitive as the dimensionality increases, especially in nonparametric settings. Moreover, working with random measures requires Monte Carlo estimates of distances between samples of these measures, further amplifying the computational burden due to the need to compute the transport multiple times.

For simplicity, we opt for linear transport maps, that is,
mappings of the form $T(\ys_1,\dots,\ys_n)=(w_1
\ys_1,\dots,w_n\ys_n)$; for brevety $\ys_{1:n}\mapsto w\odot
\ys_{1:n}$. 
These have demonstrated good performance in various settings, offering
sufficient flexibility to produce favorable results while possessing
dimension-free contraction rates, thus avoiding the curse of
dimensionality \citep{flamary2019concentration}. Further, because the $M$ points sampled from the Pólya urn are not part of the coreset support, we restrict the weighting to only the observed data, leaving $y^*_{n+1:M+1}$ unmodified.  Finally, we represent the posterior predictives by the empirical distribution of the $M$-step predictions.

With this we can give a full procedure, stated in Algorithm
\ref{alg:core}.
 To highlight the dependence of $\yts_{n+1:n+M}$ on
the transformation $T$, i.e., $\omega$, we write $\sig(\omega,M)
\equiv \yts_{n+1:n+M}$. 

\begin{algorithm}
\caption{Predictive coreset}\label{alg:core}
\begin{algorithmic}[1]\itemsep=0.1cm
\State Given a dataset $y_{1:N}$
\State Set a desired coreset size $n$
\State Sample coreset support $\ys_{1:n}$
\For{$t = 1,\dots, \text{niter}$}
\State Sample $\theta_t \sim \pi(\dd\theta)$ from the priors.
\State Sample $M$ points from the Pólya urn for the full data set implied by \eqref{eq:postpred}
$$
 \x{1},\dots, \x{M} \ \sim p_N 
$$
\vspace{-.5cm}
\State Let \(\sigma(\omega, M)  = \yts_{n+1:n+M}\) denote
 a size $M$ sample from a weighted Pólya urn
\(p_{\;\omega\odot \ys_{1:n}}\) using weights $\omega=\omega_t$. 
\State \label{step7}
The weights \(\omega_t\) are determined using \eqref{eq:coreset}, substituting the
empirical distributions under approximate posterior predictive draws. 
That is,

$$
\omega_{t} = \arg \min_{\omega} d\left[\ph(\x{1:M}), \;
  \ph(\sigma(\omega, M))\right],
$$
where \(\ph\) is the empirical distribution. 
\EndFor
\State Return $\bar{\omega}=\frac{1}{\text{niter}} \sum w_t$
\end{algorithmic}
\end{algorithm}
 The algorithm formalizes the optimization of the distances between
random probability measures in
\eqref{eq:coreset} by an average over minimizations using distances over
approximate posterior predictive draws. 

Recall questions (A) and (B) after Equation \eqref{eq:coreset}. We have addressed (A) by putting a DP prior on the data-generating distribution, which facilitates a simple way to obtain the posterior predictives. For (B), we chose to parametrize the family of transformations $\mathcal{T}$ with affine maps over the observations, leaving the imputed predictions untouched. This way we obtained an efficient yet expressive algorithm sufficient for common applications. In the following section, we demonstrate the algorithm’s performance through several empirical examples.

\section{Simulations}
\subsection{Density Estimation}

We performed a simulation study with 100 repeat simulations. On each iteration we sampled $N=1,000$ data points from a location mixture of three Gaussians with random parameters $\mu_j\sim N(0,\sigma^2 )$, $j=1,2,3$ and $\sigma^2\sim \text{InvGamma}(1,1)$.

We put an infinite mixture of Gaussians with a DP mixing measure prior on the density, and choose the 2-Wasserstein metric as discrepancy. The infinite mixture was fitted using Stein variational gradient descent (SVGD) \citep{qiang2016} for the unit coreset, the predictive coreset and the full data, after which we recoreded the discretized KL divergence between the posterior mean densities. To get the coreset we took a subsample of size $n=50$ and sampled $M=200$ times from the Pólya urn.

The results are shown in Figure \ref{fig:dppm_a}. The coreset improves the fit compared to the uniform subsample. Taking the data from one repeat simulation, we see that the weighting helps to recover the two separate modes compared to the unit weights, as for the former the estimated mean density was unimodal. Figure \ref{fig:dppm_b} shows a histogram of the difference in KL divergences $D_{KL}(\hat{f}_\text{coreset}, \hat{f}_{\text{full}}) - D_{KL}(\hat{f}_\text{unit}, \hat{f}_{\text{full}}) $. In 81\% of the repeat simulations the weighted coreset yielded a closer fit to the full data than the unit coreset, showing an overall good improvement.

\begin{figure}[ht]
    \centering
    \begin{subfigure}[b]{0.45\linewidth}
        \centering
        \includegraphics[width=\linewidth]{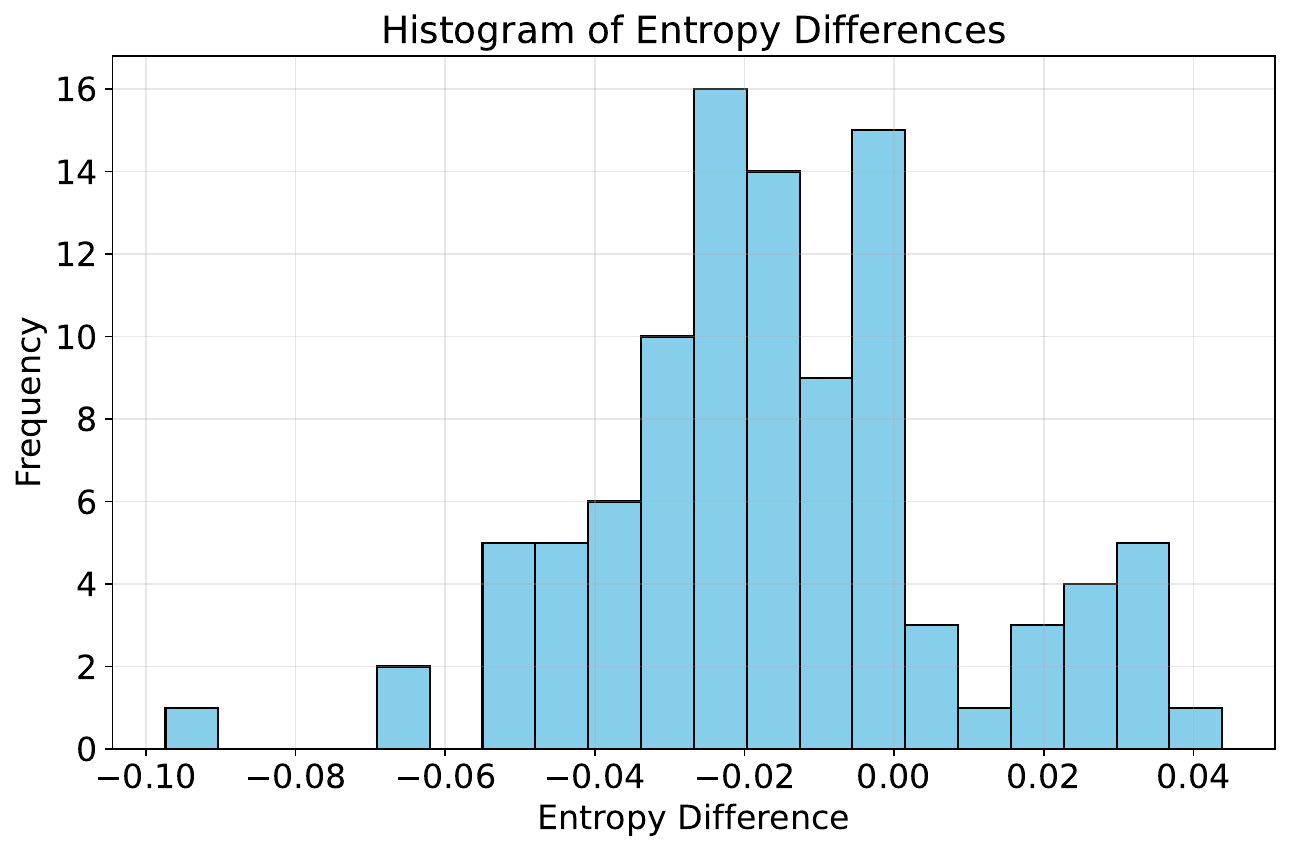}
        \caption{} 
        \label{fig:dppm_a}
    \end{subfigure}
    \hfill
    \begin{subfigure}[b]{0.45\linewidth}
        \centering
        \includegraphics[width=\linewidth]{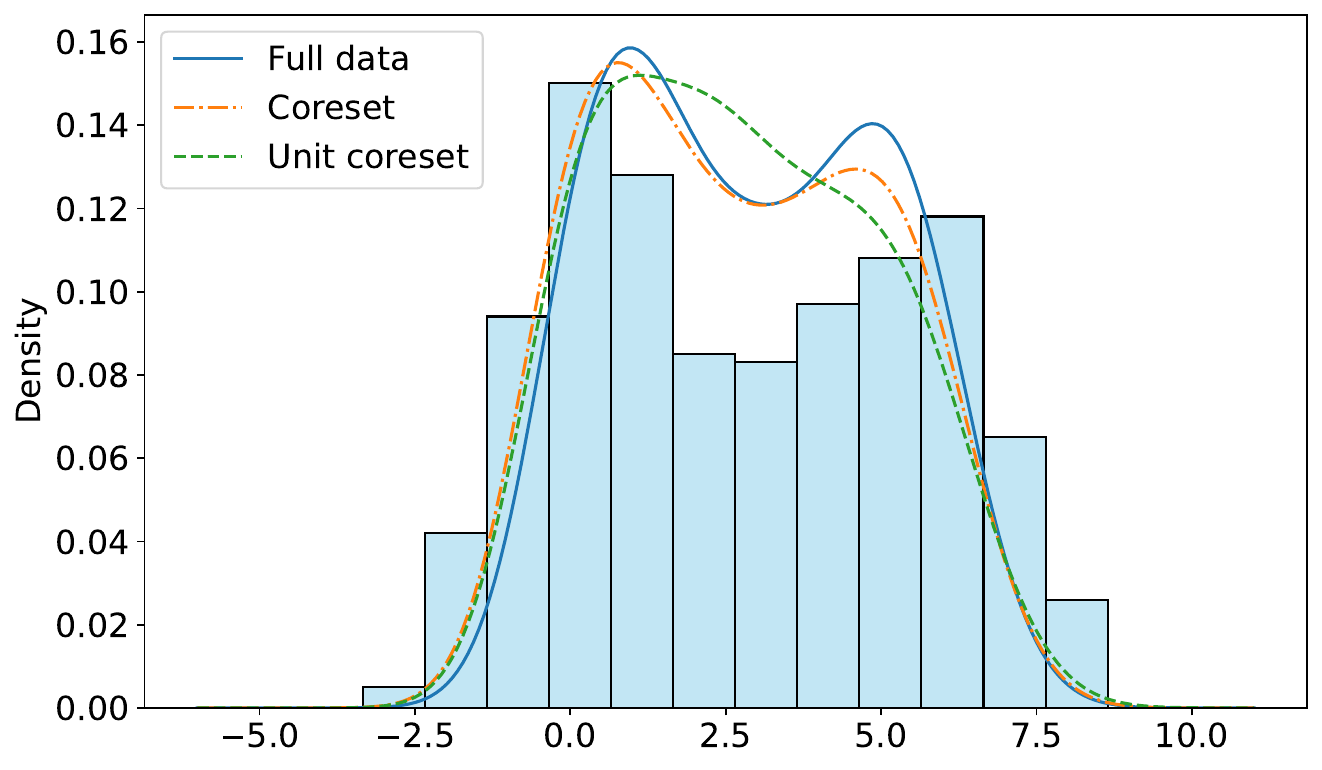}
        \caption{} 
        \label{fig:dppm_b}
    \end{subfigure}
    \caption{The left plot shows a histogram of the difference in estimated KL divergence between the posterior means for two cases: one comparing the coreset to the full dataset, and the other comparing the unit coreset to the full dataset. The right plot shows the posterior mean densities for all three cases.}
    \label{fig:dppm}
\end{figure}

\subsection{Logistic Regression}

Similarly to the first simulation, we performed 100 repetitions of the procedure described next: 

We sampled 10,000 data points $(x_i,y_i)$, $i=1,\dots,10,000$ from a logistic regression by generating a continuous covariate and an intercept $x \sim N\left([0,0],5I\right)$ and using a vector of logistic regression coefficients $\beta\sim N\left([0.5, 0.5],I\right)$.

Note the sample space here is now $\mathcal{X} = \{0,1\} \otimes \mathbb{R}^2$, as we cannot uncouple the response from the covariates. Thus we define a base metric on $\mathcal{X}$ given by the product distance $d\left((x_1, y_1), (x_2, y_2)\right)^2 = d_{L_2}(x_1,x_2)^2 + \mathds{1}(y_1=y_2)$, and as discrepancy we use d-Wasserstein distance.

To construct a coreset we need to now define a way of sampling from the empirical. The logistic regression gives a distribution for $y\mid x$, so to sample pairs $(x,y)$ we need a distribution for $x$. We choose the empirical $x_1,\dots,x_n \sim F_n^x$, which corresponds to the usual bootstrap.

We then get a subsample of size 20, corresponding to 0.2\% data. We sample 100 trajectories of size $M=100$ of the predictive, and obtain the optimum weights using the 2d-Wasserstein distance with the ground metric $d$ defined above.  

After this we fitted a logistic regression with standard Gausian priors on the coefficients using mean-field variational inference, obtained the posterior mean for the betas and registered the $L_2$ distance between the mean posterior logit functions based on the coreset and the unit coreset compared to  the full data.

The results are shown in figure \ref{fig:logreg}. We can see for one iteration the fitted logistic function with the predictive coreset is much closer to the one with the full data than the one with the unit coreset, \emph{i.e.} with just subsampling. We also show a histogram with the $L_2$ differences between the mean posterior logit functions. We found 71\% of the iterations yielded closer distances between the functions for the coreset than for the uniform subsample.

\begin{figure}[ht]
    \centering
    \includegraphics[width=0.45\linewidth]{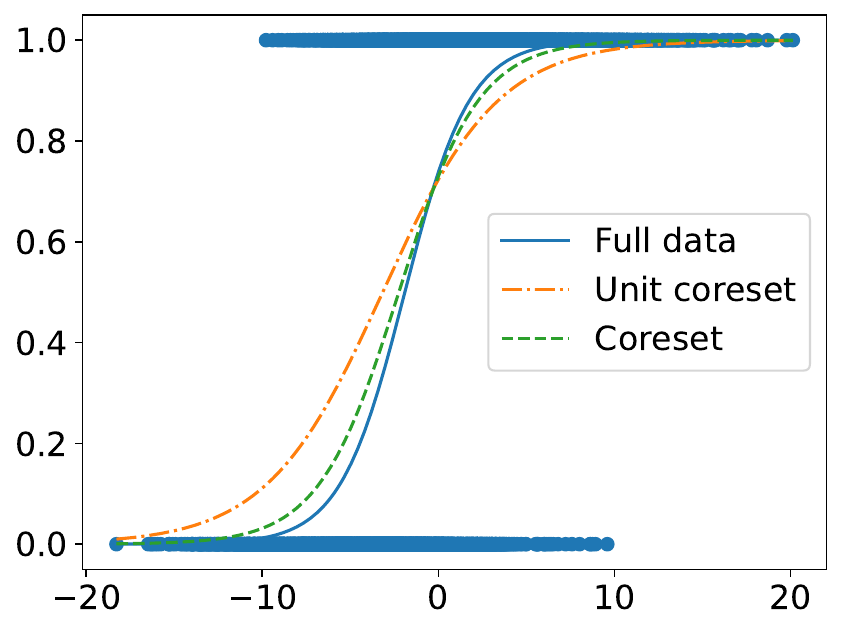}
    \includegraphics[width=0.45\linewidth]{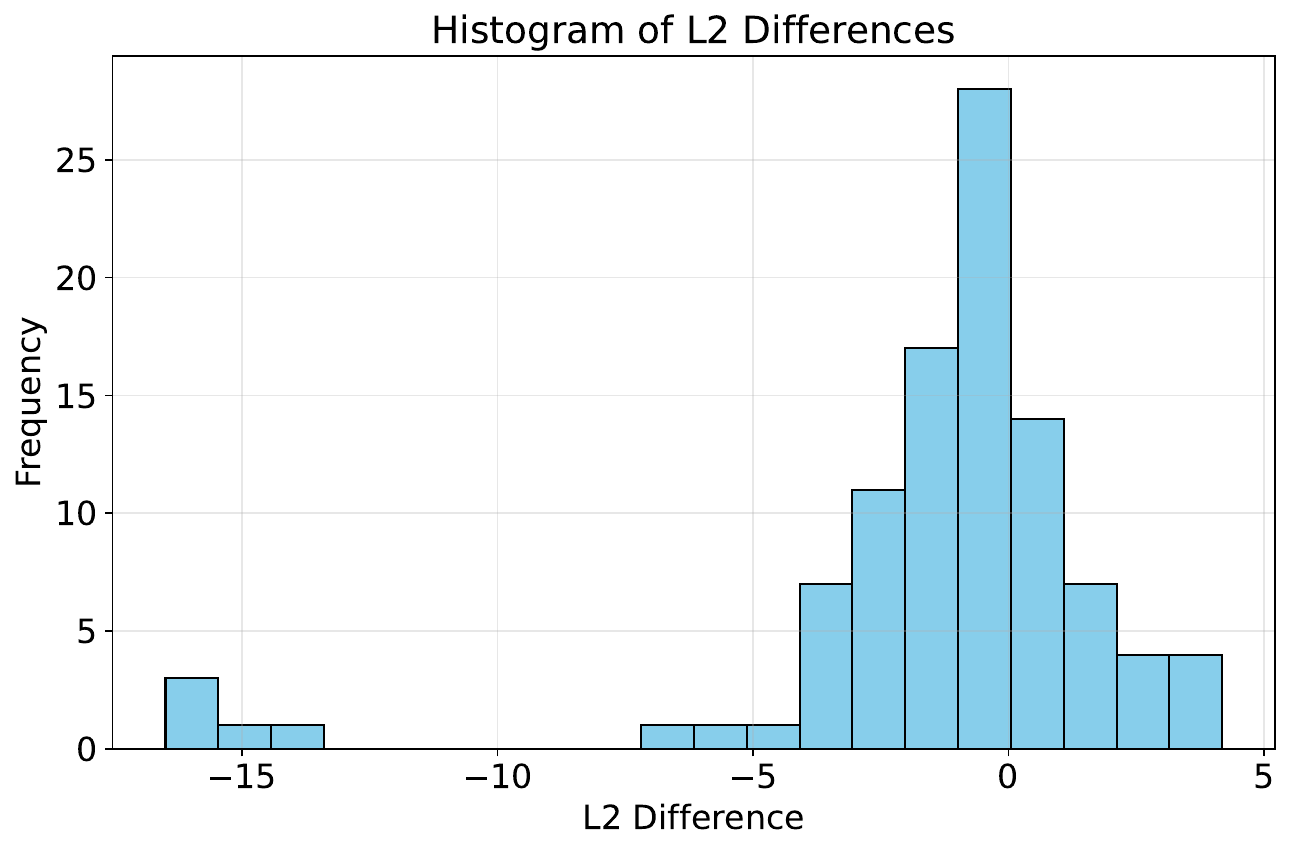}
    \caption{On the left is the results for a coreset of size 20; on the right the distance between the mean posterior logits for the coreset and the uniform subsample.}
    \label{fig:logreg}
\end{figure}

\section{Coresets on random partitions}

The predictive framework developed here facilitates the construction
of coresets in non-Euclidean spaces. To illustrate this we consider coresets for random partitions. A key point to highlight is that,
in many scenarios, we do not directly observe a random partition;
instead, we observe data points $y_i$ that we wish to cluster. This
situation requires the notion of a mapping from the data space
$\XX$ to the space of partitions of $[n]=\{1,\dots,n\}$.
In a Bayesian context,  inference on partitions is often based on
mixture models, assuming that the data arise as
\begin{equation}\label{eq:mixture}
  y_i \mid \theta_i \ind k(\cdot \mid \theta_i),
  \quad \theta_i \mid \p \iid \p,
  \quad \p \sim G.
\end{equation}
where $\p$ is a discrete random probability measure and $k(\cdot \mid
\cdot)$ is a suitable kernel.
Equation \eqref{eq:mixture} is writing the mixture model
$y_i \sim \int k(\cdot \mid \th)\, d\p(\th)$ as an equivalent
hierarchical model.
Interpreting latent indicators $s_i$
that link the latent $\th_i$ with the
discrete point masses of $\p$ as cluster membership indicators
formalizes a random partition of experimental units $i=1,\ldots,n$.
The construction is motivated by Kingman's representation theorem \citep{kingman1978} which shows that under certain symmetry conditions, including exchangeability, any random partition can be thought of as arising
this way.
Let $\sbf=(s_1,\ldots,s_n)$ denote the implied random partition (with
or without ordering). For many choices of $\p$ the implied probability
function $p(\sbf)$ can be conveniently represented in terms of an
exchangeable partition probability function (EPPF) $f(\nb)$ where
$\nb=(n_1,\ldots,n_K)$ are the cardinalities of the clusters under $\sbf$
and $f(\nb)$ is a symmetric function of an additive decomposition
$(n_1,\ldots,n_K)$ of $n$ \citep{broderick2013}.

%
%

\begin{figure}[ht]
\centering
\includegraphics[width=0.8\linewidth]{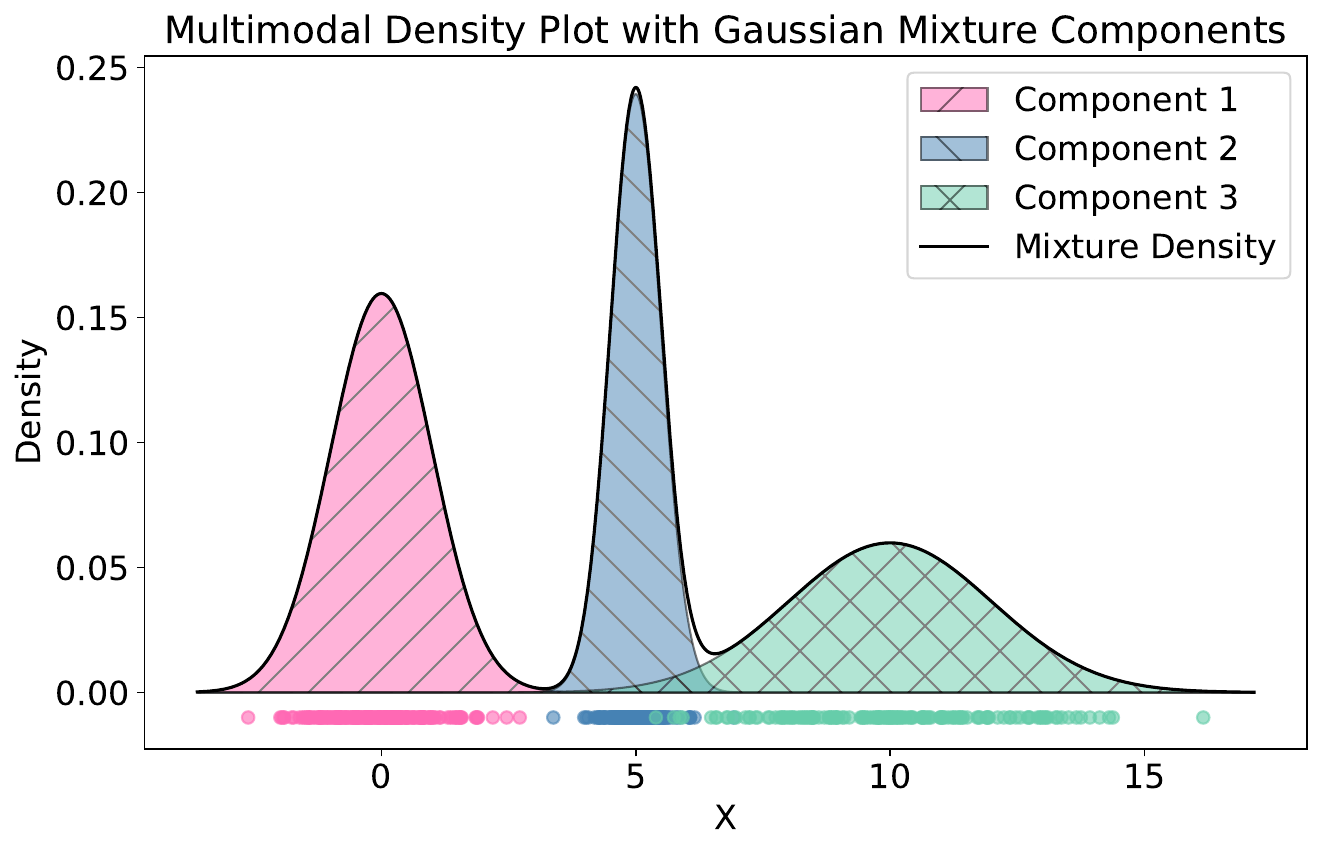}
\caption{Data simulated from a mixture model, along with its induced clustering.}
\label{fig:clust_mixt}
\end{figure}
In summary, 
the posterior distribution resulting from \eqref{eq:mixture} yields a
sequence $\theta=(\theta_i)$, which can be mapped to a partition of 
$[n]$. A crucial distinction between this scenario and
traditional density estimation is that density estimation 
operates directly 
within the space of densities supported over the observed data,
whereas the earlier involves the pushforward of  the posterior mixing
measure into partitions. In essence, the data informs the posterior
distribution $\p^N$, which then generates the predictive distribution
mapping to an induced partition rather than back into the original
data space. In short, the approach to define predictive coresets for a random partition in this context is to apply Algorithm \ref{alg:core}, this time for the pairs $(\th_i, y_i)$.

Evaluating the effects of different coreset weights on the resulting partitions requires considering both the observed data \( (y_i) \) and the corresponding cluster indicators \( (s_i) \), which are derived from the latent sequence \( (\theta_i) \). Since \( (\theta_i) \) and \( (s_i) \) are latent variables, they must be imputed. A straightforward approach is to sample these variables from \( p(\mathbf{s}) \) and \( p(\boldsymbol{\theta} \mid \mathbf{s}) \), respectively, thereby generating a new allocation vector \( \mathbf{s} \) and corresponding latent \( \theta_i \) values at each iteration. This procedure then enables us to compute the posterior predictive distribution by calculating the Wasserstein distance between the empirical distributions of the mixing measures.

First, the parameters $(\theta_i)$ are sampled for the coreset support $(y_i)$ from the implied marginal prior under \eqref{eq:mixture}; from them the cluster assignments $(s_i)$ are obtained, after which the data is augmented as $(y_i, s_i)$. At this stage, depending on the model, there are two possible paths to follow. If the distribution on partitions allows for a closed-form posterior predictive distribution, one can directly sample the cluster assignments $s_{N+j}$ from $p(s_{N+1:M}  \mid y_{1:N})$ and then generate the data $\tilde{y}_{N+j}$ conditioned on these assignments and a cluster-specific distribution $G_{\theta_i}$. To obtain an approximate posterior for $G_{\theta_i}$ we assume a Dirichlet process prior with base measure $k(\cdot \mid \theta_i)$ and concentration parameter $\alpha$.

If the distribution on $s_{N+1:M}$ cannot be updated in a closed form, the approach differs. In this case, the Dirichlet process base measure in \eqref{eq:postpred} can be extended as $k(\dd y_i\mid \theta_i) \otimes \rho(\dd \theta_i)$. Here, cluster assignments are sampled implicitly by sampling $\theta_i$ using the Pólya urn  for the pairs $(y_i,\theta_i)$, a method that contrasts with the direct posterior updates used in the former scenario. The full procedure is shown in Algorithm \ref{alg:partcore}. Note that, in general, we do not need to compute distances between the cluster assignments $s_k$ but between the $\theta_k$.

\begin{algorithm}
\caption{Predictive coresets for partitions (conjugate update)}\label{alg:partcore}
\begin{algorithmic}[1]\itemsep=0.1cm
\State Given a dataset $y_{1:N}$
\State Set desired coreset size $n$
\State Sample a subset $\ys_{1:n}$
\For{$t = 1,\dots, \text{niter}$}
\State \label{step5} Sample clustering for the full data from the prior under \eqref{eq:mixture},
   $$ \theta_{1:N} \sim p(\thb \mid
  \sbf,y_{1:N}).$$ 
\State Extract $\ths_{1:n}$ and the implied
   partition  $s_{1:n}^* $ from $\theta_{1:N}$.
   \State Sample 
   $\{(\x{j},\thx{j});\; j=1,\ldots,M\}$
   from the Polya urn for $\{(\th_i,y_i);\; i=1,\ldots,N\}$.
   \State \label{step8} Get
   $$
   \omega_{t} = \arg \min_{\omega} d\left(\ph(\x{1:M}\right), \;
   \ph\left(\sigma(\omega, M)\right),$$
   where \(\sigma\) is now extended to include the \(\ths_k\). 
\EndFor
\State Return $\bar{\omega}=\frac{1}{M}\sum \omega_t$.
\end{algorithmic}
\end{algorithm}
 Algorithm \ref{alg:partcore} is simply implementing Algorithm \ref{alg:core} for the pairs
$(y_i, \th_i)$, including only the one additional step of generating
the random partition in step \ref{step5}.
And posterior predictive sampling in step \ref{step8} is extended to
include $\ths_i$.

Analogously to the previous simulation studies, we repeated the following 100 times: we sampled $N=10,000$ observations from a Gaussian mixture with 6 components and recorded cluster memberships. We then sampled 50 observations uniformly and obtained the coreset weights with $M=500$ posterior predictive simulations and the $L_2$ distance on the $\theta_i$. A DP mixture of Gaussians was then fit using Stein gradient variational inference with an ordering constraint on the means to avoid label switching. 

To obtain a point estimate for clustering we took the mode of the cluster assignments across 100 posterior simulations. The results are shown in Figures \ref{fig:clust_out} and \ref{fig:melia}. The first figure shows the approximate posterior inference for a randomly chosen repeat simulation, whereas the second one shows a histogram with the difference in variation of information \cite{melia} between the coreset and the full data, and the subsample and the full data. We found the coreset yielded a closer inference to the one based on full data than just the subsample in 91\% of the samples.

\begin{figure}[ht]
    \centering
    \includegraphics[width=0.9\linewidth]{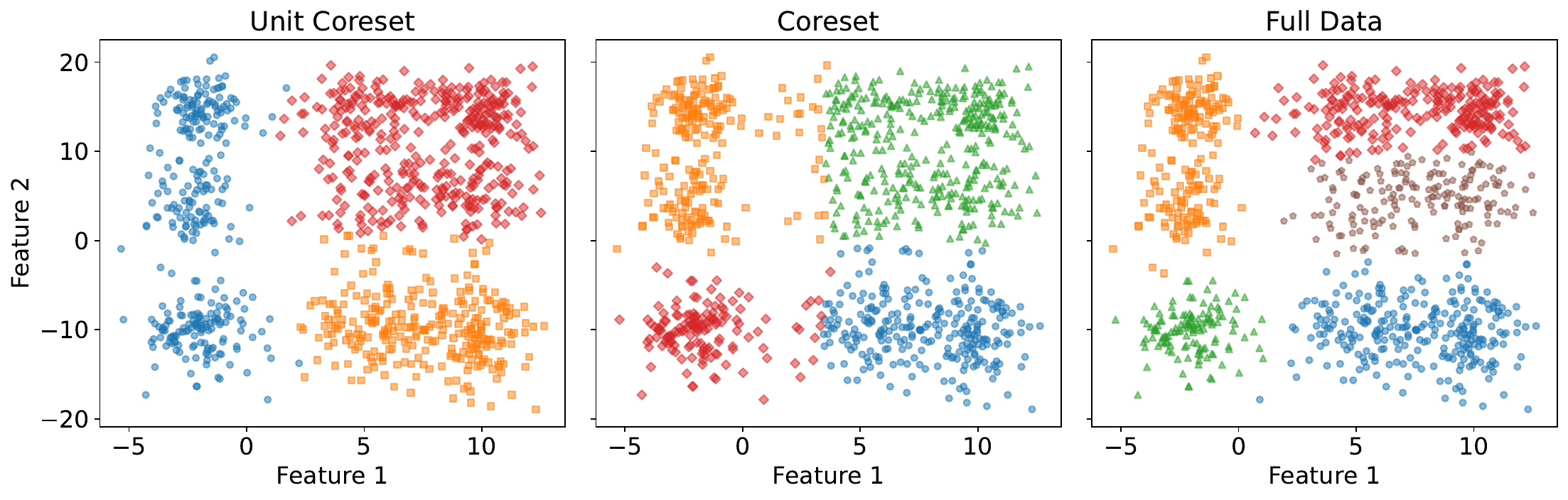}
    \caption{The panels show approximate posterior inference using a uniformly chosen core set (left), the proposed core set using Algorithm 3 (center) and full posterior inference (right).}
    \label{fig:clust_out}
\end{figure}

\begin{figure}[ht]
\centering
\includegraphics[width=0.7\linewidth]{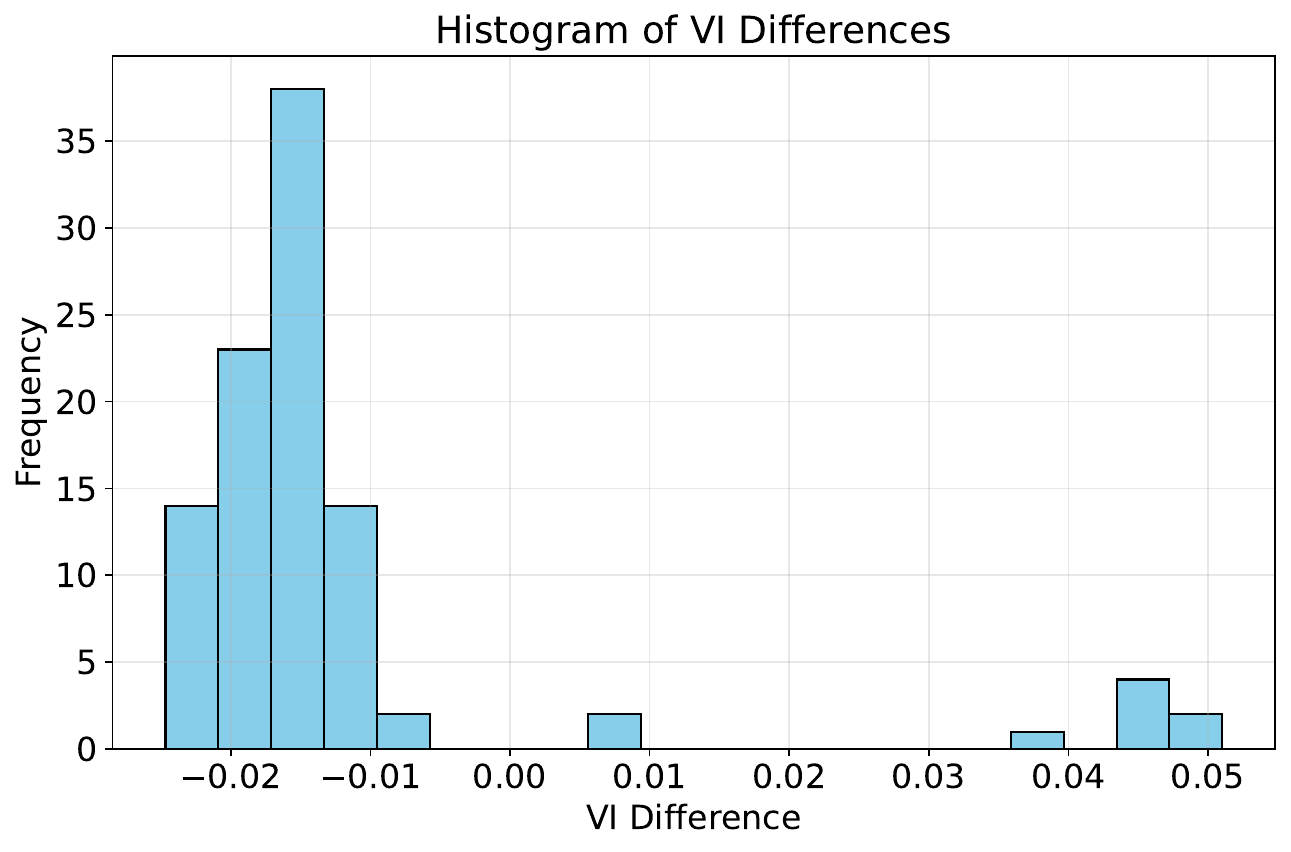}
\caption{Histogram of the differences in variation of information between the coreset and the full data, and the subsample and the full data. }
\label{fig:melia}
\end{figure}

\section{Theoretical analysis}

We give properties of the predictive coresets that guarantee their asymptotic performance. Moreover, we discuss how non-asymptotic consistency results can inform the selection of an appropriate coreset size. Specifically, given a base metric space $(\XX,
\mathtt{d})$, we equip the space of distributions
$\p(\XX)$ with the Wasserstein metric
$W_{\mathtt{d}}$, and then consider
$(\mathcal{P}(\mathcal{P}(\XX)), W_{W_{\mathbb{d}}})$. For simplicity, we denote
$\mathbb{d} = W_{W_{\mathtt{d}}}$ and $d=W_{\mathtt{d}}$.

Algorithm \ref{alg:core} computes a functional of two posterior predictive distributions, which themselves are functions of a random measure: the posterior distribution. Thus to provide contraction rates we will work with the Wasserstein over Wasserstein topology (emph{c.f} \cite{hugomarta}). This corresponds to endowing $\mathcal{P}(\mathcal{P}(\XX))$, the space of random probability measures over $\XX$ with the Wasserstein distance with ground metric again the $p$-Wasserstein distance over $\XX$.

The strategy now is to first obtain a posterior contraction rate, use it to get convergence and to establish a rate of convergence for the posterior predictives and then use a Cauchy sequence argument combined with optimal transport theory to get a rate of how close the weighted data set obtained from Algorithm \ref{alg:core} can get to the inference with the full data.

First we restate the main result from \cite{camerlenghi2022} for the DP. 

\begin{theorem}
Assume that $(\XX, \mathtt{d})$ is a totally bounded metric space with packing number of order $N_\delta(\XX, \mathtt{d})\sim \frac{1}{\delta^a}$ for some $a>0$ and small enough $\delta>0$. Then the following is a posterior contraction rate for the Dirichlet process with mean measure $H$, precision $\alpha$ and truth $p^0$ in the $q$-Wasserstein metric:
\begin{equation*}
\epsilon_n = \epsilon_{n, q}(\XX, p^0) +n^{-(a+q)/2} ,
\end{equation*}
with $\epsilon_{n, q}(\XX, p^0)$ being the $q$-Wasserstein rate of convergence of the Glivenko-Cantelli theorem for $H$, giving a $q$-Wasserstein posterior contraction rate at $p^0$.
\end{theorem}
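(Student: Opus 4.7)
The plan is to exploit the conjugacy of the Dirichlet process, which gives the posterior $\mathbb{P}_N \sim \text{DP}(\alpha H + N\widehat{p}_N,\; \alpha + N)$, and to decompose the $q$-Wasserstein distance from the truth via the triangle inequality
\[
W_q(\mathbb{P}_N, p^0) \;\leq\; W_q(\mathbb{P}_N,\bar F_N) \;+\; W_q(\bar F_N, p^0),
\]
where $\bar F_N = (\alpha H + N \widehat{p}_N)/(\alpha+N)$ is the deterministic posterior mean measure. This splits the analysis cleanly into a bias term (how close the posterior mean sits to $p^0$) and a concentration term (how tightly the random DP draw clusters around its mean).

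For the bias term, since $\bar F_N$ equals $\widehat{p}_N$ up to a mass perturbation of order $\alpha/(\alpha+N)$, a standard Wasserstein perturbation bound yields $W_q(\bar F_N, p^0) \leq W_q(\widehat{p}_N, p^0) + O(N^{-1/q})$, so this contribution is controlled by the $q$-Wasserstein Glivenko--Cantelli rate for iid samples from $p^0$, which by definition is $\epsilon_{N,q}(\mathbb{X}, p^0)$. This is routine given the totally bounded assumption, and it recovers the first summand of $\epsilon_n$.

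For the concentration term, I would exploit the packing assumption $N_\delta(\mathbb{X},\mathtt{d}) \asymp \delta^{-a}$ to build, for each scale $\delta$, a partition $\{B_1,\dots,B_{K_\delta}\}$ of $\mathbb{X}$ into cells of diameter at most $2\delta$ with $K_\delta \lesssim \delta^{-a}$. Assigning all mass of a cell to its center yields a quantization error at most $\delta$ in the $W_q$ base metric. The remaining randomness is the vector of cell masses $(\mathbb{P}_N(B_k))_k$, which under $\text{DP}(\cdot,\alpha+N)$ is Dirichlet-distributed with variance $O(1/(\alpha+N))$ per coordinate. Summing the $L^1$-type deviation across $K_\delta$ cells and raising to the $1/q$ power inside a Wasserstein bound gives a concentration contribution of order $\delta + (K_\delta/(\alpha+N))^{1/(2q)} \lesssim \delta + (\delta^{-a}/N)^{1/(2q)}$; optimizing in $\delta$ produces the rate $N^{-(a+q)/2}$, matching the second summand. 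A chaining/peeling argument across dyadic scales may be needed to remove logarithmic losses and to upgrade the in-expectation bound to an in-probability contraction statement.

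The hard part will be the concentration step: controlling a \emph{Wasserstein} fluctuation of a DP-distributed random measure around its mean is more delicate than controlling total variation, because the cost of moving mass depends on the geometry of $(\mathbb{X},\mathtt{d})$ and not just on cell masses. One must align the scale $\delta$ in the partitioning with the Dirichlet concentration $\alpha+N$ and simultaneously track how errors in each cell compose under optimal transport; this is where the packing exponent $a$ and the Wasserstein order $q$ combine to give the exponent $(a+q)/2$. Once this concentration inequality is established, the triangle-inequality decomposition delivers the claimed additive contraction rate immediately.
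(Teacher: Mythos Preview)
The paper does not prove this theorem at all: it is introduced with ``First we restate the main result from \cite{camerlenghi2022} for the DP'' and is simply quoted as a known result, with no proof given. So there is no ``paper's own proof'' to compare your proposal against; the paper treats this as an imported black box and moves on to use it in Theorems~2 and~3.

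That said, your strategy is the natural one and broadly matches how such Wasserstein contraction results for the DP are obtained in the literature: exploit conjugacy, split via the triangle inequality into a bias term $W_q(\bar F_N,p^0)$ governed by the empirical Glivenko--Cantelli rate, and a fluctuation term $W_q(\mathbb{P}_N,\bar F_N)$ controlled by quantizing $\mathbb{X}$ at scale $\delta$ and using Dirichlet concentration on the induced cell masses. One point to revisit: your stated optimization of $\delta + (\delta^{-a}/N)^{1/(2q)}$ does not actually yield $N^{-(a+q)/2}$. Balancing the two terms gives $\delta^{(a+2q)/(2q)} \asymp N^{-1/(2q)}$, hence $\delta \asymp N^{-1/(a+2q)}$, which is a different exponent. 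Either the concentration bound you feed into the optimization needs a sharper form (for instance a $W_q^q$ rather than $W_q$ comparison, or a different dependence of the Dirichlet deviation on $K_\delta$), or the exponent in the statement reflects a slightly different bookkeeping in the cited source. If you intend to supply a self-contained proof rather than cite the result, this is the step to tighten.
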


The $\epsilon_{n,q}$ term is of order $o_P\left(Y(p^0) \left[\log\log
    n / n\right]^{1/2q}\right)$ assuming $p^0$ has enough finite
moments and $r-$dimensional Euclidean support. Notably, this rate is independent of the total mass $\alpha$. The quantity $Y(p^0)$ has a complex form depending on the dimension $r$ and the $2q+\delta$
moment of $p^0$ for some $\delta>0$, see \cite{dolera2019uniform} for
more details.  

For example, when $p^0$ a multivariate Gaussian with mean
$m_0$ and covariance matrix $V_0$ it takes the form
$Y(m_0,V_0)=\sqrt{2}\left\{\sigma_{\max}[V_0]+d\sqrt{K(\varepsilon,r)\text{tr}(V_0)}\right\}$
for small enough $\varepsilon>0$, where $\sigma^2_{\max}$ stands for
the largest eigenvalue and $K(\varepsilon,r)$ is a positive
constant. We can use this to show that Algorithm \ref{alg:core} converges.

\begin{theorem}
As the number of simulations grows $n_t\to\infty$ the weights converge a.s.
\[
\lim_{n_t\to\infty}\frac{1}{n_t}\sum_{k=1}^{n_t} w_k = \arg \min_{\omega} d\left(p(\cdot\mid\omega \odot y_{1:n}), \; p(\cdot\mid y_{1:N})\right)
\]
\end{theorem}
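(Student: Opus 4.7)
The plan is to combine a strong law of large numbers for the outer Monte Carlo loop with a stability result for the Wasserstein argmin, powered by the DP posterior contraction in Theorem~1. I would first view each iteration $t$ as producing an independent realization $\omega_t$ of a random vector whose law depends only on the triple $(\theta_t,\; \x{1:M}^{(t)},\; \sig(\omega,M)^{(t)})$ generated at that iteration. Since the weights live in a compact constraint set (nonnegativity plus the sparsity cap $\|\omega\|_0\le M$), the sequence $\{\omega_t\}$ is uniformly integrable, and Kolmogorov's strong law gives
\[
\frac{1}{n_t}\sum_{k=1}^{n_t}\omega_k \;\xrightarrow{\text{a.s.}}\; \mathbb{E}[\omega_1].
\]
The remaining task is then to identify $\mathbb{E}[\omega_1]$ with the deterministic argmin on the right-hand side of the statement.

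The identification proceeds by showing that each random argmin $\omega_t$ is itself consistent for the population argmin. Applying Theorem~1 to the two Pólya urns used in Algorithm~\ref{alg:core}, the empirical measures $\ph(\x{1:M})$ and $\ph(\sig(\omega,M))$ both converge in $d=W_{\mathtt{d}}$, as $M$ grows, to the true posterior predictives $p(\cdot\mid y_{1:N})$ and $p(\cdot\mid \omega\odot \ys_{1:n})$ at the rate $\epsilon_M$ of Theorem~1. I would strengthen the latter contraction to hold uniformly in $\omega$ over the admissible set by a packing-number bound on the image of the compact simplex under the affine map $\omega\mapsto \omega\odot\ys_{1:n}$; this keeps the entropy factor $a$ in Theorem~1 bounded independently of $\omega$. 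Continuity of $d$ in both arguments then yields epi-convergence of the random objective to its deterministic counterpart, and, provided the latter admits a unique minimizer $\omega^\star$, a standard epi-convergence argument delivers
\[
\omega_t \;\xrightarrow{\text{a.s.}}\; \omega^\star := \arg\min_\omega d\bigl(p(\cdot\mid y_{1:N}),\; p(\cdot\mid \omega\odot \ys_{1:n})\bigr).
\]

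Combining the two pieces gives $\mathbb{E}[\omega_1] = \omega^\star$, so the Cesàro average of the $\omega_k$'s converges almost surely to $\omega^\star$, matching the claim. The main technical obstacle I anticipate is the second step: promoting the per-$\omega$ posterior contraction of Theorem~1 to a statement that is uniform over the admissible weight simplex, while simultaneously guaranteeing unique identifiability of the population argmin so that continuous-mapping-type arguments transfer convergence of the objective to convergence of its minimizer. Affineness of the transport map $T$ chosen in Section~3.2 and strict convexity of the $W_q^q$ objective in $\omega$ under mild nondegeneracy of the coreset support $\ys_{1:n}$ should secure uniqueness, while compactness of the constraint set together with the explicit entropy factor in Theorem~1 should secure the uniform contraction; together these two ingredients close the argument.
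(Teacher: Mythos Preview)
Your two-step plan does not splice together the way you claim. The SLLN step is an $n_t\to\infty$ statement for \emph{fixed} $M$: it delivers $\tfrac{1}{n_t}\sum_k \omega_k \to \mathbb{E}[\omega_1]$, where the law of $\omega_1$ (and hence its mean) depends on $M$. Your second step, by contrast, is an $M\to\infty$ statement: the empirical predictives $\ph(\x{1:M})$ and $\ph(\sig(\omega,M))$ only converge to the population predictives, and hence $\omega_t$ only approaches $\omega^\star$, as the P\'olya-urn length $M$ grows. For the fixed $M$ of Algorithm~\ref{alg:core}, $\omega_1$ is the argmin of a \emph{random} objective, and there is no reason its expectation coincides with the argmin of the population objective: $\mathbb{E}[\arg\min_\omega G(\omega)]\neq \arg\min_\omega \mathbb{E}[G(\omega)]$ in general. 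So the line ``combining the two pieces gives $\mathbb{E}[\omega_1]=\omega^\star$'' is unjustified. If instead you try to send $M\to\infty$ jointly with $n_t$, the $\omega_t$ are no longer identically distributed and Kolmogorov's SLLN as you invoke it does not apply; you would need a triangular-array law of large numbers together with an explicit coupling of $M$ and $n_t$, none of which is present in the theorem statement or in your sketch. The epi-convergence and uniform-in-$\omega$ contraction machinery you outline is therefore working on the wrong asymptotic axis relative to the claim.

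For comparison, the paper's own argument takes a different and much shorter route: it does not invoke a law of large numbers over the Monte Carlo iterations at all. Instead, the key lemma is continuity of the weighting map $\varpi_w:\sum_i\delta_{x_i}\mapsto\sum_i\delta_{w_i x_i}$ in the Wasserstein topology, which is checked directly via the independent coupling, and the conclusion is then drawn from Glivenko--Cantelli consistency of the DP posterior predictive so that both predictive laws converge to $p^0$. In particular, the paper never goes through epi-convergence or uniform-in-$\omega$ posterior contraction; continuity of $\varpi_w$ plus DP consistency is the whole engine.
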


\begin{proof}
    This is a consequence of the consistency of the posterior
    predictive for the DP given by the Glivenko-Cantelli theorem
    \citep{camerlenghi2022},  as when $n_t\to\infty$, both
    distributions converge in the Wasserstein space to the true data
    generating distribution $p_0$, provided the weighting map $\varpi_w:
    \sum \delta_{x_i} \mapsto \sum \delta_{w_i x_i}$ is continuous.

    The map $w_i \mapsto  w_i \odot x_i$ is continuous in any Banach space $(\XX,d)$. Take $\varepsilon>0$ and a discrete measure $\nu\in \mathcal{P}(\XX)$ with atoms $x=(x_i)$. Considering the independent coupling between $\varpi_{w_1} \nu$ and $\varpi_{w_2} \nu)$ we get from a change of variables, take $\delta>0$ and two vectors of weights $w_1, w_2$ such that $\|w_1-w_2\|_2<\delta$ implies , $d(w_1\odot x, w_2 \odot x)<\varepsilon$, which always exists by continuity.  Now
    \begin{align*}
    W_d(\varpi_{w_1} \nu, \varpi_{w_2} \nu) &\leq \int d(y,z)\, \dd \varpi_{w_1} \nu\,\dd  \varpi_{w_2} \nu\\
    &= \int d(w_1 y, w_2 z)\, \nu(\dd y)\,\nu(\dd z)\\
    &\leq \varepsilon,
    \end{align*}
    
 Thus the map $\varpi_w$ is continuous.
\end{proof}

We can now give a first bound.

\begin{theorem}
The posterior predictives satisfy for any $M_n\to\infty$
$$
d(\mathbb{P}_N, \mathbb{P}_n) = o_P(M_n \epsilon_{N,n}),
$$
where $\epsilon_{N,n} = \epsilon_N + \epsilon_n$.
\end{theorem}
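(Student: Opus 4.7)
The plan is to pivot through the true data-generating distribution $p^0$. By the triangle inequality,
\[
d(\mathbb{P}_N, \mathbb{P}_n) \;\le\; d(\mathbb{P}_N, p^0) \;+\; d(p^0, \mathbb{P}_n),
\]
so it suffices to bound each term by $o_P(M_n \epsilon_N)$ and $o_P(M_n \epsilon_n)$ respectively and then sum, using $\epsilon_{N,n} = \epsilon_N + \epsilon_n$.

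For the first term I would express the posterior predictive as a mixture $\mathbb{P}_N = \int p \, d\pi^N(p)$, where $\pi^N$ denotes the DP posterior under $y_{1:N}$. The joint convexity of the Wasserstein distance yields
\[
d(\mathbb{P}_N, p^0) \;=\; W_{\mathtt{d}}\!\left(\int p \, d\pi^N(p),\; p^0\right) \;\le\; \int W_{\mathtt{d}}(p, p^0)\, d\pi^N(p).
\]
Theorem 1 guarantees that, for any $M'_n \to \infty$,
\[
\pi^N\bigl\{p : W_{\mathtt{d}}(p, p^0) > M'_n \epsilon_N\bigr\} \;\longrightarrow\; 0 \quad \text{in } P_0^\infty\text{-probability.}
\]
Combined with the total boundedness of $\XX$ from Theorem 1 (which makes $W_{\mathtt{d}}(p, p^0)$ uniformly bounded by $\operatorname{diam}(\XX)$), a truncation of the integral at level $M'_n \epsilon_N$ upgrades this in-probability concentration into a bound on the expected distance of the form $o_P(M_n \epsilon_N)$, choosing $M'_n \to \infty$ with $M'_n = o(M_n)$. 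An identical argument applied to the coreset-support predictive $\mathbb{P}_n$, for which the subsample $\ys_{1:n}$ is still i.i.d.\ from $p^0$ so Theorem 1 applies with rate $\epsilon_n$, yields $d(p^0, \mathbb{P}_n) = o_P(M_n \epsilon_n)$, and the result follows.

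The main obstacle will be the truncation step that lifts Theorem 1's mass-concentration statement into a bound on $\mathbb{E}_{\pi^N}[W_{\mathtt{d}}(p, p^0)]$. Total boundedness of $\XX$ makes this routine because the integrand is uniformly bounded by the diameter; without that assumption one would need additional uniform integrability or moment control on $W_{\mathtt{d}}(p, p^0)$ under the DP posterior, e.g., via the explicit moment formulas available for the Wasserstein distance between a DP and its mean measure.
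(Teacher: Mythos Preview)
Your overall architecture matches the paper's: both pivot through $p^0$ via the triangle inequality and then control each leg using the Wasserstein contraction rate of Theorem~1. The difference is in how you pass from posterior contraction (a statement about $\pi^N$-mass of Wasserstein balls) to control of $d(\mathbb{P}_N,p^0)$ (a statement about the posterior mean measure). The paper does this in one line by invoking a result of \cite{nguyen2016} that the Dirichlet process is isometric to its mean measure in any $q$-Wasserstein space; this identifies $d(\mathbb{P}_N,p^0)$ with the $W_{W_{\mathtt d}}$-distance between the DP posterior and $\delta_{p^0}$, so the rate $\epsilon_N$ transfers directly with no truncation. Your route---convexity of the Wasserstein distance to bound $d(\mathbb{P}_N,p^0)$ by $\int W_{\mathtt d}(p,p^0)\,d\pi^N(p)$, followed by truncation at level $M_n'\epsilon_N$ with the tail controlled by $\operatorname{diam}(\XX)$---is more elementary and self-contained, and your diagnosis that total boundedness is exactly what makes the tail step routine is correct. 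The paper's version is shorter but leans on an external isometry; yours makes the mechanism explicit and would extend to nonparametric priors for which no such isometry is known. One minor caveat: for $q>1$ it is $W_q^q$ that is convex, so your displayed inequality should read $W_q(\mathbb{P}_N,p^0)\le\bigl(\int W_q^q(p,p^0)\,d\pi^N(p)\bigr)^{1/q}$, but the truncation argument goes through unchanged after this adjustment.
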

\begin{proof}

    The posterior predictive of the DP is given by the generalized Pólya urn, whose one-step transition is the mean posterior measure \citep{ghosal2017}. \cite{nguyen2016} showed the DP is isometric to its mean measure in any $q$-Wasserstein space. So by definition of posterior contraction rates for every sequence $M_n\to\infty$ 
    
     \begin{align*}
     P\left(d(\mathbb{P}_n, p_0) >M_n\epsilon_n \mid x_1,\dots,x_n\right)&=P\left(d\left(\mathbb{P}_\theta^n, \delta_{p_0}\right) >M_n\epsilon_n \mid x_1,\dots,x_n\right)\\
     &\longrightarrow 0 \;\;p_0-\text{ a.s.}
     \end{align*}
    
    Now, using the triangle inequality
    \begin{align*}
    d(\mathbb{P}_N, \mathbb{P}_n)&\leq d(\mathbb{P}_N, p_0) + d(\mathbb{P}_N, p_0),
    \end{align*}
    
    thus
    \[
    P\left(d\left(\mathbb{P}_n, \mathbb{P}_N\right) \leq M_n\left(\epsilon_n+\epsilon_N\right) \mid x_1,\dots,x_n\right)
     \longrightarrow 1 \;\;p_0-\text{ a.s.}
    \]

    This implies that $p_0$ almost surely $d\left(\mathbb{P}_n, \mathbb{P}_N\right)=o_P(M_n(\epsilon_n+\epsilon_N))$. 
\end{proof}

This gives a (loose) upper 
bound of the error we are trying to minimize. We can interpret the rate by looking at the $\epsilon_{N,n}$ term. First note that this gives a family of rates so that as $n$ grows the distance between both distributions diminishes; moreover, the fact this bound was derived from a posterior contraction rate gives a non-asymptotic result on the shape of the Wasserstein balls so that with high probability the posterior predictive based on the sub-sample is at most at $\epsilon_{N,n}$ from the one based on the full data. 

Finally we will discuss the implications of the coreset weights, nameley how much can it improve over $\epsilon_{N,n}$. This can be translated to the expressiveness of the coreset transform, as if $w\odot\p_n$ can match $\p_n$, then the $\sigma$-algebras generated by both point clouds would be identical.

Foremost, the transformation  $T_w: \mu\mapsto w\odot \mu$ can be interpreted as a transport plan between $\p_n$ and $\p_N$. Affine Monge maps have been shown to be reasonably well behaved for Euclidian settings in the $W_2$ space \cite{flamary2019concentration}, achieveing good generalization bounds while retaining computational efficacy. Our intercept-free setting retains the same properties by centering the samples.


\section{An adaptive extension and conclusion}

Among the limitations of the proposed approach is in particular the prior simulation (in Step 5 of Algorithm \ref{alg:core}), which can lead to very inefficient approximations of the posterior predictive used for equation \eqref{eq:coreset}. Before we conclude with a broader discussion of limitations and extensions, we briefly outline a possible strategy to mitigate this limitation.

We introduce an extension of the basic predictive coreset algorithm aimed at reducing its computational burden, particularly in hierarchical models with multiple layers of hyperparameters. In the standard versions of Algorithms \ref{alg:core} and~\ref{alg:partcore}, the predictive trajectories are generated by independent sampling from the hyperpriors (in Step 5 of Algorithm \ref{alg:core}) sampling uniformly from the hyperpriors. However, in high-dimensional or deep hierarchical settings, this approach can be exceedingly slow because many sampled hyperparameter values lie in regions of negligible posterior mass, leading to a large fraction of wasted trajectories.

To address this issue, we draw inspiration from Wasserstein Approximate Bayesian Computation (Wasserstein-ABC) methods \citep{bernton2019}. The idea is to replace uniform sampling over the priors with an MCMC-based sampling scheme, focusing exploration on regions of the hyperparameter space $\Theta$ with high (approximate) posterior probability. This selective sampling strategy reduces computational overhead by concentrating on the portions of the parameter space that are most relevant for the model's predictive performance.

Our proposed algorithm employs a Metropolis-Hastings (MH) kernel. Specifically, after drawing a candidate hyperparameter $\theta^*$ from a proposal distribution $\pi(\mathrm{d}\theta^* \mid \theta_{t-1})$, we accept $\theta^*$ with a probability derived from an approximation to the usual likelihood ratio.  Let $p_\theta(\dd z)$ denote the sampling model. Recall that that in ABC, one aims to approximate the (unknown) posterior
\[
p^\varepsilon(\mathrm{d}\theta \mid y_{1:N}) 
\;\propto\;
\pi(\mathrm{d}\theta)\,\int \mathbb{1}\!\bigl(d(y_{1:n}, z_{1:n}) \le \varepsilon\bigr)\,p_{\theta}(\mathrm{d}z_{1:n}),
\]
where $z$ is hypothetical data generated from $p_\theta(\dd z)$ and $d(\cdot, \cdot)$ is a user-chosen discrepancy function, and $\varepsilon > 0$ is the ABC tolerance parameter. Because we generally do not have direct access to the likelihood $p_\theta$, we construct a Monte Carlo approximation by simulating data from $p_\theta$ repeatedly, yielding
\[
\hat{p}^\varepsilon(\mathrm{d}\theta \mid y_{1:N}) 
\;\propto\; 
\pi(\mathrm{d}\theta)\,\sum_{t} \mathbb{1}\!\bigl(d(y_{1:n}, z_{1:n}^t) \le \varepsilon\bigr),
\]
where $z_{1:n}^t$ is the $t$-th simulated dataset of size $n$ from the model $p_\theta$. Denoting $z_{\theta,1:n}^t$ a sample of size $n$ from $p_\theta$ and assuming a symetric proposal distribution, the acceptance probability in the Metropolis-Hastings step becomes 
\[
\rho(\theta^* \mid \theta) 
\;=\; 
1 \,\wedge\, 
\frac{
  \pi(\mathrm{d}\theta^*)\,\sum_{t} \mathbb{1}\!\bigl(d(y_{1:n}, z_{\theta^*,1:n}^t) \le \varepsilon\bigr)
}{
  \pi(\mathrm{d}\theta)\,\sum_{t} \mathbb{1}\!\bigl(d(y_{1:n}, z_{\theta,1:n}^t) \le \varepsilon\bigr)
}.
\]

This ratio reflects how likely the newly proposed hyperparameter $\theta^*$ is to generate simulated data close (under $d$) to the observed data $y_{1:n}$, compared with the current hyperparameter $\theta$. Once accepted, $\theta^*$ typically leads to predictive trajectories that are more likely to arise from a region of $\Theta$ with higher approximate posterior mass.

\subsection{Overview of the adaptive algorithm}

Algorithm \ref{alg:adap} describes the complete adaptive process. At each iteration, a candidate hyperparameter $\theta^*$ is generated from a proposal distribution centered at the current state $\theta_{t-1}$. Next, several pseudo-datasets $z^t_{\theta^*,1:n}$ are simulated from $p_{\theta^*}$. Using these simulations, the acceptance ratio $\rho(\theta^* \mid \theta)$ is computed based on the ABC posterior approximation. Finally, $\theta^*$ is either accepted or rejected according to $\rho(\theta^* \mid \theta)$, with the accepted state becoming $\theta_t$.

By iterating this procedure, we sample hyperparameters with a frequency proportional to their approximate posterior weight, thereby reducing the generation of wasted trajectories that are unlikely to contribute meaningfully to the model's predictive accuracy. The full adaptive algorithm is given in Algorithm \ref{alg:adap}.

\begin{algorithm}
\caption{Adaptive predictive coreset}\label{alg:adap}
\begin{algorithmic}[1]\itemsep=0.1cm
\State Given a dataset $y_{1:N}$
\State Set desired coreset size $n$
\State Sample coreset support $\ys_{1:n}$
\For{$t = 1,\dots, \text{niter}$}
\State Sample a proposal $\theta^* \sim \pi(\theta)$ from the hyperpriors.
\State Accept $\theta_t=\theta^*$ with probability $\rho(\theta^*|\theta)$, otherwise set $\theta_t=\theta_{t-1}$.
\State Sample $M$ points from the Pólya urn for the full data set 
$$
\ \x{1},\dots, \x{M}  \sim p_N 
$$
\vspace{-.5cm}
\State Let \(\sigma(\omega, M)  = \yts_{n+1:n+M}\) denote
 a size $M$ sample from a weighted Pólya urn
\(p_{\;\omega\odot \ys_{1:n}}\) using weights $\omega$. 
\State \label{step7}
The weights are determined using \eqref{eq:coreset}, substituting the
empirical distributions under approximate posterior predictive draws. 
That is,

$$
\omega_{t} = \arg \min_{\omega} d\left(\ph(\x{1:M}), \;
  \ph(\sigma(\omega, M))\right),
$$
where \(\ph\) is the empirical distribution. 
\EndFor
\State Return $\bar{\omega}=\frac 1M \sum \omega_t$
\end{algorithmic}
\end{algorithm}

\subsection{Discussion}
In this work, we proposed a novel construction of coresets for Bayesian inference by approximating predictive distributions rather than directly matching posteriors. This approach is both model-agnostic and scalable, making it particularly well-suited for large-scale datasets. Building on recent advances in computational optimal transport and Bayesian nonparametrics, we developed a simple yet flexible algorithm that performs well across diverse settings, including nonparametric scenarios that were previously out of reach for traditional coreset construction methods.

Despite its flexibility and strong empirical performance, the proposed approach has certain limitations. While we emphasize its model-agnostic nature, the method is not entirely free from modeling assumptions. Specifically, the approximation of the posterior predictive with a DP relies on exchangeability, which may not hold in all practical scenarios. Additionally, the prior sampling for hyperparameters can be computationally inefficient, particularly when dealing with high-dimensional parameter spaces, potentially limiting scalability in such cases. Another important consideration is the choice of the transformation $T$, which plays a crucial role in the coreset construction. Poor parametrization of $T$ can significantly affect the performance and robustness of the algorithm, highlighting the need for principled strategies for selecting or learning $T$ in future work.

\section*{Acknowledgements}
I am grateful to my advisor, Peter Müller, for his guidance, valuable discussions, and constructive feedback throughout this research.

\section*{Supplementary Material}

\textbf{Code}: all code used to generate the results presented in this paper is publicly available at \texttt{github.com/BernardoFL/Predictive-coresets}.

\bibliography{main}
\end{document}